	\definecolor{lightblue}{rgb}{.60,.60,1}
\newcommand{\latticecut}{\textsc{Poset Cut}\xspace}
\newcommand{\linearcut}{\textsc{Linear Cut}\xspace}
\newcommand{\dmultiwaycut}{\textsc{Directed Multiway Cut}\xspace}
\newcommand{\umultiwaycut}{\textsc{Multiway Cut}\xspace}
\newcommand{\skewmulticut}{\textsc{Skew Separator}\xspace}
\newcommand{\dmulticut}{\textsc{Directed Multicut}\xspace}
\newcommand{\umulticut}{\textsc{Multicut}\xspace}
\newcommand{\clique}{\textsc{Clique}\xspace}
\newcommand{\dfvs}{\textsc{Directed Feedback Vertex Set}\xspace}
\newcommand{\W}{\mathsf{W}\xspace}
\newcommand{\NP}{\mathsf{NP}\xspace}
\newcommand{\PT}{\mathsf{P}\xspace}
\newcommand{\FPT}{\mathrm{FPT}\xspace}
\newcommand{\NO}{\textsc{NO}\xspace}
\newcommand{\T}{{\mathcal T}}
\newcommand{\pair}[1]{{\left\langle#1\right\rangle}}
\newcommand{\size}[1]{{\left\lvert#1\right\rvert}}
\DeclareMathOperator{\poly}{poly}
\newcommand{\sm}{\setminus}
\theoremstyle{plain}        \newtheorem{thm}{Theorem}%[section]
\theoremstyle{plain}        
\theoremstyle{plain}        \newtheorem{prop}[thm]{Proposition}
\theoremstyle{plain}        
\theoremstyle{definition}  
\theoremstyle{plain}        
\theoremstyle{plain}        \newtheorem{cor}[thm]{Corollary}
\theoremstyle{remark}    
\author{
Robert F.\ Erbacher\\
\emph{Army Research Lab}\\
\url{robert.f.erbacher.civ@mail.mil}
\and
Trent Jaeger\\
\emph{Penn State University} \\
\url{tjaeger@cse.psu.edu}
\and
Nirupama Talele\\
\emph{Penn State University}\\
\url{nrt123@psu.edu}
\and
Jason Teutsch \\
\emph{Penn State University} \\
\url{teutsch@cse.psu.edu}
}
\title{Directed Multicut with linearly ordered terminals}
\begin{document}

\maketitle

\begin{abstract}
Motivated by an application in network security, we investigate the following ``linear'' case of \dmulticut.  Let $G$ be a directed graph which includes some distinguished vertices $t_1, \ldots, t_k$.  What is the size of the smallest edge cut which eliminates all paths from $t_i$ to $t_j$ for all $i < j$?  We show that this problem is fixed-parameter tractable when parametrized in the cutset size~$p$ via an algorithm running in $O(4^p p n^4)$ time.
\end{abstract}

\section{Multicut requests as partially ordered sets}

The problem of finding a smallest edge cut separating vertices in a graph has received much attention over the past 50 years.  \dmulticut, one of the more general forms of this problem, encompasses numerous applications in algorithmic graph theory.
\begin{framed}
\begin{description}
\item [Name:] \dmulticut.
\item[Instance:] A directed graph $G$ and pairs of terminal vertices $\{(s_1, t_1), \dotsc, (s_k, t_k)\}$ from~$G$.
\item[Problem:] Find a smallest set of edges in $G$ whose deletion eliminates all paths $s_i \to t_i$.
\end{description}
\end{framed}

Special cases of the \dmulticut problem have been met with success, although the general problem has no polynomial-time solution unless $\PT = \NP$.  The classical and efficient Ford-Fulkerson algorithm \cite{FF56} solves \dmulticut for the case of a single pair of terminal vertices, yet deciding whether there exists a minimum edge cut of a given size separating both $s$ from $t$ and $t$ from $s$ in a directed graph is $\NP$-complete \cite{GVY94} as is deciding the size of a minimum edge cut separating three vertices in an undirected graph \cite{DJPSY92}.

While \dmulticut appears intractable from the perspective of $\NP$-completeness, it remains an open problem to determine whether we can find an efficient parametrized solution for \dmulticut.  In practice we can optimize our solution based on other parameters besides the input length.  In the case of \dmulticut, the relevant parameters are the number of (sets of) terminal vertices $k$ and and the size of the smallest solution, or \emph{cutset}, $p$. Formally a problem is \emph{fixed parameter tractable (FPT) in parameters $k$ and~$p$} if there exists an algorithm which, on input $x$, either gives a solution consistent with parameters~$k$ and~$p$ or correctly decides that no such solution exists in at most $f(k,p) \cdot \poly(\size{x})$ steps for some computable bound~$f$.

Some subcases of \dmulticut already have FPT solutions within the realm of fixed-parameter tractability.  Recently Kratsch, Pilipczuk, Pilipczuk, and  Wahlstr\"{o}m \cite{KPPW12} showed that \dmulticut restricted to acyclic graphs is fixed-parameter tractable when parameterized in both the size of the cutset and the number of terminals.  Chitnis, Hajiaghayi, and Marx \cite{CHM12}, on the other hand, investigated \dmulticut with restrictions of the terminal pairs.  They showed that \dmultiwaycut,  the special case of \dmulticut where all pairs of terminal vertices must be separated in both directions, is FPT when parametrized in just the size of the cutset.  In the negative direction, Marx and Razgon \cite{MR11} showed that \dmulticut is $\W[1]$-hard when parameterized the size of the cutset.  Thus an FPT solution for \dmulticut, if such an algorithm exists, most likely requires parameterization in the number of terminals in addition to the size of the cutset.  We remark that in this same paper \cite{MR11} Marx and Razgon also showed that the undirected \umulticut problem is FPT when parametrized in  the size of the cutset.  Bousquet, Daligault, and Thomass{\'e} independently achieved this same result \cite{BDT11}.

We now formalize the \latticecut problem, a subject which derives from a network security framework \cite{Pik09}.  We shall show that \latticecut is equivalent to \dmulticut with respect to fixed parameter tractability.
%\pagebreak
\begin{framed}
\begin{description}
\item [Name:] \latticecut

\item[Instance:] A directed graph $G=(V,E)$ with terminal vertices $T \subseteq V$, a partially ordered set $P$, and a surjective map $\ell : T \to P$.

\item[Problem:] Find a minimum set of edges $S \subseteq E$ so that for all terminal vertices $x,y \in T$, if there is a path from $x$ to $y$ in $(V, E \setminus S)$ then $\ell(x) \geq_P \ell(y)$.
\end{description}
\vspace{-1ex}
\end{framed}

The \latticecut problem is immediately a special case of \dmulticut.  Indeed, given an instance of \latticecut, we can read off from the poset $P$ and mapping $\ell: T \to P$ those pairs of terminals which must be separated in the \latticecut solution.  These pairs together with the original input graph give us an instance of \dmulticut such that an edge cut is a solution to the \latticecut instance if and only if it is a solution to the \dmulticut instance.  Thus if \dmulticut is fixed-parameter tractable, then so is \latticecut.  We now show that the reverse is also true.

\begin{thm} \label{thm: hop}
If \latticecut is $\FPT$, then so is \dmulticut.  In particular, given an instance of \dmulticut with $k$ terminal pairs and a permitted maximum of $p$ cuts, we can efficiently find an instance of \latticecut with at most $2k$ terminal nodes and a permitted maximum of $p$ cuts such that the \latticecut instance has a solution iff the \dmulticut instance does.
\end{thm}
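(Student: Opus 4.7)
The plan is to construct a \latticecut instance from the given \dmulticut instance by augmenting the graph with $2k$ ``sentinel'' terminal vertices and defining a poset on these sentinels that encodes exactly the required pairwise separations. Given an instance $(G,\{(s_i,t_i)\}_{i=1}^k)$ of \dmulticut with budget $p$, I would form $G'$ by adjoining fresh vertices $s'_1, \ldots, s'_k, t'_1, \ldots, t'_k$ and, for each $i$, adding $p+1$ internally vertex-disjoint directed paths from $s'_i$ to $s_i$ and $p+1$ such paths from $t_i$ to $t'_i$. The new terminal set is $T = \{s'_i, t'_i : 1 \leq i \leq k\}$, the poset $P$ consists of elements $a_1, \ldots, a_k, b_1, \ldots, b_k$ with the relations $a_i \geq_P b_j$ whenever $i \neq j$ (together with reflexivity), and the labeling is $\ell(s'_i)=a_i$, $\ell(t'_i)=b_i$. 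The relation $\geq_P$ is trivially reflexive and antisymmetric, and is transitive because no $b_j$ lies above any other element, so the only nontrivial chains are of length~$2$.

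Next I would verify the equivalence of solutions. By construction the only pairs of terminals potentially connected by a path in $G'$ are $(s'_i, t'_j)$ (via some $s_i \to \cdots \to t_j$ in $G$) and the trivial self-paths, since $s'_j$ has no in-edges outside its own bridge and $t'_j$ has no out-edges. The poset forbids a surviving path precisely when $i=j$, so a \latticecut solution in $G'$ is the same thing as an edge set that separates $s_i$ from $t_i$ in $G$ for every~$i$. For the forward direction, any \dmulticut solution $S \subseteq E(G)$ of size at most $p$ therefore also solves \latticecut in $G'$. For the reverse, given a \latticecut solution $S'$ of size at most $p$, the $p+1$ internally disjoint $s'_i$-bridges and $t'_i$-bridges ensure $S'$ leaves at least one intact path from $s'_i$ to $s_i$ and from $t_i$ to $t'_i$; the required separation of $s'_i$ from $t'_i$ must then be achieved inside $G$, so $S' \cap E(G)$ is a \dmulticut solution of size at most $p$.

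The main delicacy is preventing $S'$ from ``wasting'' cuts on the sentinel bridges: a single bridge edge would be too cheap to sever and would yield a \latticecut solution that cannot be translated back to \dmulticut. Using $p+1$ edge-disjoint bridge paths makes it impossible, within the budget $p$, to disconnect $s'_i$ from $s_i$ through the sentinel structure alone, so any valid solution is compelled to exhibit a genuine cut of $s_i$ from $t_i$ in $G$. The rest of the argument---the poset axioms, surjectivity of $\ell$, and the absence of extraneous paths between sentinels---follows directly from the construction, and the size of $G'$ is polynomial in $|G|$ and $p$, so the reduction is efficient.
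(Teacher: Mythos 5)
Your proposal is correct and follows essentially the same route as the paper: the same sentinel terminals attached by $p+1$ parallel paths, the same poset with $a_i$ above $b_j$ exactly when $i \neq j$, and the same budget argument showing any \latticecut solution must realize its separations inside the original graph. The only cosmetic difference is that you argue directly that $S' \cap E(G)$ is a \dmulticut solution, whereas the paper first normalizes the cut to avoid bridge edges; these are interchangeable.
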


\begin{proof}
Consider an instance of \dmulticut consisting of a graph $G$, forbidden terminal pairs $s_1 \not\to t_1, \dotsc, s_k \not\to t_k$, and a cutsize parameter $p$.  We define the corresponding \latticecut instance as follows.  The graph $G'$ will consist of all the nodes and edges in $G$ plus some extra nodes and edges.  For each terminal node $s_i$, add a node $a_i$ and enough paths from $a_i$ to $s_i$ so that $a_i$ and $s_i$ remain connected in any solution for the \latticecut instance.  In more detail
\begin{itemize}
\item add $p+1$ nodes $c_{i,1}, \dotsc, c_{i,p+1}$, 
\item add an edge from $a_i$ to each $c_{i,j}$, and
\item add a further edge from each $c_{i,j}$ to $s_i$.
\end{itemize}
Similarly for each terminal node $t_i$, we add a node $b_i$ and connect $t_i$ to $b_i$ with many paths: make $p+1$ new nodes $d_{i,1}, \dotsc d_{i,p+1}$, add an edge from $t_i$ to each $d_{i,j}$, and add an edge from each $d_{i,j}$ to $b_i$.  We define the poset for this \latticecut instance as follows: set $a_i$ to be greater than $b_j$ for all $i \neq j$, and all other pairs of terminal nodes are designated as incomparable.

By construction, there is a path $a_i \to b_i$ iff there is a path $s_i \to t_i$, and this condition holds even when up to $p$ edges are deleted from $G'$.  If there is a \latticecut solution on $G'$ under the given poset with at most $p$ cuts, there is a further solution which is identical but avoids cutting any paths between $a_i$ and $s_i$ or $t_i$ and $b_i$.  Hence we may assume that the solution has all its cuts inside the embedding of $G$ within $G'$.  Transferring these cuts back to the original graph $G$ gives a solution for the \dmulticut instance.  On the other hand, any solution for \dmulticut in $G$ will also be a solution for \latticecut in $G'$ because the only paths between pairs of terminal vertices in the \latticecut instance start at some $a_i$ and end at some $b_j$.
\end{proof}

Edwards, Jaeger, Muthukmaran, Rueda, Talele, Teutsch, Vijayakumar \cite{MRTVJTE12} and Jaeger, Teutsch, Talele, Erbacher \cite{EJTT13a} distilled the placement of host security mediators on a distributed system to a solution for the \latticecut problem.   They interpreted the components of a distributed system as nodes in a directed graph with edges indicating which components can communicate directly with others.  Some information traveling through a network will have high integrity, and other information will have lower integrity, and security is achieved by blocking all flows from lower integrity to higher integrity nodes.  Terminal nodes represent both the possible attack surfaces and higher integrity entities in the system, and each terminal corresponds to a specific integrity level as measured by the poset.  In this context, we can interpret \latticecut as a search for minimum intervention which mediates between all illegal information flows.

For the remainder of this paper, we will focus on the subcase of \latticecut where the poset is a chain.
\begin{framed}
\begin{description}
\item[Name:] \linearcut

\item[Instance:] A directed graph $(V,E)$ and a tuple of \emph{terminal} sets $\pair{T_1, \dotsc, T_k}$ which are subsets of $V$.

\item[Problem:] Find a smallest set of edges $S \subseteq E$ such that for any $s \in T_i$ and $t \in T_j$, if there is a path from $s$ to $t$ in $(V, E \setminus S)$, then $i \geq j$.
\end{description}
\end{framed}
That is, \linearcut wants to find a smallest edge cut which prevents every terminal set $T_i$ from flowing to $T_j$ whenever $j > i$.  We shall show that \linearcut, which is $\NP$-hard in the sense of Proposition~\ref{prop: lcnp}, is FPT when parameterized in the size of the cutset.  Rephrased in terms of posets,  Chitnis, Hajiaghayi, and Marx's algorithm \cite{CHM12} for \dmultiwaycut shows that \latticecut is FPT parametrized in the cutset size when the underlying poset is an antichain.

\section{A parameterized algorithm for Linear Cut}

We shall show that \linearcut is FPT when parametrized in the size of cutset.  Before presenting our parametrized algorithm, we first analyze the following example which illustrates why the na\"{i}ve greedy cut does not yield an optimal solution.  The graph given in Figure~\ref{fig: greedy} has three terminal vertices $t_0$, $t_1$, and $t_2$, and we would like to find a small set of edges whose removal eliminates all paths from $t_0$ to either $t_1$ or $t_2$ as well as all paths from $t_1$ to $t_2$.  Consider the greedy algorithm which uses the Ford-Fulkerson algorithm to first eliminate all paths from $t_0$ to the other terminal vertices and then again to extinguish the paths from $t_1$ to $t_2$.  A minimal edge cut from $t_0$ to the set $\{t_1, t_2\}$ has size~3, so let us assume that the algorithm chooses edges $\{a,b,c\}$.  Now a minimal edge cut from $t_1$ to $t_2$ has size~2, for example $\{h,i\}$.  Thus this greedy algorithm solves the \linearcut instance with a cut of size~5.  On the other hand, $\{d,e,f,g\}$ is a solution of size~4.

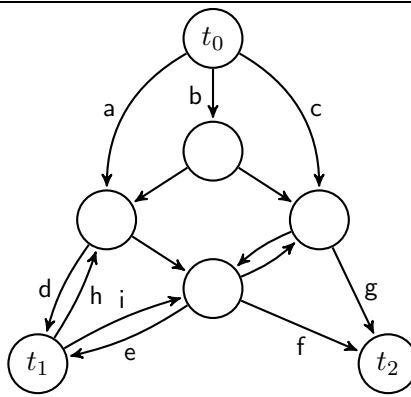
\begin{figure}[h]
\caption{The greedy algorithm is not optimal.}
\begin{center}
\begin{tikzpicture}[->,>=stealth',shorten >=1pt,node distance=2cm,
  thick,terminal/.style={circle,fill=white!20,draw}, regular/.style ={circle,fill=white!20,draw},scale = 1]

  \node[terminal] (t_0){$t_0$};
  \node[regular] (t) [below of =t_0, yshift = 0.5cm] {\phantom{$t_0$}};
  \node[regular] (l) [below left of=t, yshift = 0.5cm] {\phantom{$t_0$}};
  \node[regular] (r) [below right of=t, yshift = 0.5cm] {\phantom{$t_0$}};
  \node[regular] (b) [below right of=l, yshift = 0.5cm] {\phantom{$t_0$}};
  \node[terminal] (t_1)[below left of=l, xshift = 0.5cm, yshift = -0.5cm]{$t_1$};
  \node[terminal] (t_2) [below right of=r, xshift = -0.5cm, yshift = -0.5cm] {$t_2$};

  \path[every node/.style={font=\sffamily\small}]
    (t_0) edge [bend right] node [left] {a} (l)
          edge node [left] {b} (t)
          edge [bend left] node [right] {c} (r)
    (t) edge (l)
      edge (r)
    (l) edge (b)
    (r) edge [bend right=10] (b)
    (b) edge [bend right=10] (r)
    (l) edge [bend right=10] node [left] {d} (t_1)
    (t_1) edge [bend right=7] node [right] {h} (l)
    (t_1) edge [bend left=7] node [above] {i} (b)
    (b) edge [bend left=10] node [below] {e} (t_1)
    (r) edge node [right] {g} (t_2)
    (b) edge node [below] {f} (t_2);
\end{tikzpicture}
\end{center}
\label{fig: greedy}
\end{figure}

We now describe our parametrized solution for \linearcut.  Our algorithm either outputs a solution cut of size less $\leq p$ or returns \NO if no such cut exists.  Our construction exploits a technique used in Chen, Liu, and Lu's fixed-parameter solution \cite{CLL09} to the \umultiwaycut problem in undirected graphs which improved a result of Marx \cite{Mar06}.   A similar idea appeared earlier in Chen, Liu, Lu, O'Sullivan, and Razgon's algorithm \cite{CLLOR08} for \skewmulticut, a key step in their parametrized solution for \dfvs.  We remark that the pushing of important separators technique along the lines of \cite[Theorem~3.7]{Mar06} gives a parameterized solution for \linearcut in time $O(4^{p^3} n^{O(1)})$, and using a reduction to the \skewmulticut algorithm in \cite{CLLOR08} one can also show that \linearcut has a solution which runs in the same time as the algorithm given below, namely $O(4^p p n^4)$.

 An ($X$,$Y$)-\emph{separator} is a set of edges such that any path from $X$ to $Y$ passes through one of its members.  Our solution, Algorithm~\ref{alg: lc} proceeds in two phases.  First we handle the trivial cases where $\T = \pair{}$, $p=0$, or $T_1$ is either already separated from the other terminals or can't be separated with $p$ edge cuts (lines 1--12).  The second phase picks an edge pointing out of the $T_1$ region and checks whether making it undeleteable hurts the min size of a $(T_1, T_2 \cup \dotsb \cup T_k)$-separator.  If not we add the edge to the list of undeleteable edges, and if so we branch on the only two possibilities: either the edge belongs in the \linearcut solution or it doesn't.

The following theorem gives the main justification for this algorithm.   A set of edges is a \emph{linear cut} with respect to the $k$-tuple of terminals $\pair{T_1, \dotsc, T_k}$ if there is no path from $T_i$ to $T_j$ whenever $i<j$ once these edges have been removed.

\newpage
\ \vspace{.3in}
\begin{algorithm}[h!]
\algnewcommand\algorithmicinput{\textbf{Input:}}
\algnewcommand\Input{\item[\algorithmicinput]}
\algnewcommand\algorithmicoutput{\textbf{Output:}}
\algnewcommand\Output{\item[\algorithmicoutput]}
\algnewcommand\llet{\textbf{let}\xspace}
\algnewcommand\LC{\mathsf{LC}\xspace}
\caption{FPT algorithm for \linearcut parameterized in cutset size.}
\label{alg: lc}
\begin{algorithmic}[1]  
\Input A graph $G = (V, E)$, a $k$-tuple of terminal sets $\pair{T_1, T_2, \dotsc T_k}$ which are subsets of $V$, some undeletable edges $F \subseteq E$, and a parameter $p$.
\Statex
\Output A set of $\leq p$ edges in $E \sm F$ such that when these edges are deleted from $G$ there is no path from $T_i$ to $T_j$ for any $i<j$, if such a set of edges exists, otherwise return \NO.
\Statex
\Function{$\LC$}{$(V,E), \pair{T_1, \dotsc, T_k}, F, p$}
\State For ease of reading, let $\T = \pair{T_1, \dotsc, T_k}$.
\If {$\T = \pair{}$} \Return $\emptyset$;
\ElsIf {$p \leq 0$}
\If {for all $i <j$, $T_j$ is not reachable from $T_i$ in $G$} \Return $\emptyset$;
\Else\ \Return \NO;
\EndIf
\EndIf
\State
\llet $m$ be the size of a minimum $(T_1, T_2 \cup \dotsb \cup T_k)$-separator which does not include edges from $F$.
\If {$m > p$ or no separator exists due to undeleteable edges} \Return \NO;
\ElsIf {$m = 0$} \Return $\LC((V,E), \pair{T_2, \dotsc, T_k}, F, p)$;
\Else
\State \llet $e \in E \sm F$ be an edge with a tail reachable from $T_1$ via undeleteable edges.
\If {the size of a minimum $(T_1, T_2 \cup \dotsb \cup T_k)$-separator which does not include edges from $F \cup \{e\}$ exists and is equal to $m$,}
\State \Return $\LC ((V,E), \T , F \cup \{e\}, p)$;
\ElsIf {$\{e\} \cup \LC ((V,E \sm \{e\}), \T, F, p-1)$ or $\LC ((V,E), \T, F \cup \{e\}, p)$ is not \NO,}
\State \Return the first of these two found to have a solution;
\Else
\State \Return NO;
\EndIf
\EndIf
\EndFunction
\end{algorithmic}
\end{algorithm}

\newpage
\begin{thm} \label{thm: bobo}
Let $\pair{(V,E), \T, F, p}$ be an input to Algorithm~1, where $\T$ is an abbreviation for $\pair{T_1, \dotsc, T_k}$, and let $e$ be an edge pointing from some node reachable from $T_1$ via undeleteable edges to a node outside $T_1 \cup F$.  Suppose that the smallest $(T_1, T_2 \cup \dotsb \cup T_k)$-separator with undeletable edges $F$ is the same size as the smallest $(T_1, T_2 \cup \dotsb \cup T_k)$-separator with undeletable edges $F \cup \{e\}$ and has cardinality at most $p$.  Then the smallest linear cut among the terminal sets $\pair{T_1, \dotsc, T_k}$ with undeletable edges $F$ in $(V,E)$ has the same size as the smallest linear cut among these same terminals with undeletable edges $F \cup \{e\}$.
\end{thm}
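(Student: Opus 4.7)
The plan is to compare the two minimum linear-cut sizes directly. The inequality $\min(F) \leq \min(F \cup \{e\})$ is immediate, since any linear cut avoiding $F \cup \{e\}$ also avoids $F$. For the reverse, I would take any optimal linear cut $S$ with $S \cap F = \emptyset$ and surgically transform it into a linear cut $S'$ of size at most $|S|$ with $S' \cap (F \cup \{e\}) = \emptyset$. The central device is a canonical ``furthest'' minimum $(T_1, X)$-separator $C^*$, where $X := T_2 \cup \dotsb \cup T_k$.

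First I would construct this canonical separator. Submodularity of the cut-boundary $|\delta^+(\cdot)|$ shows that for any two minimum $(T_1, X)$-separators $C_1, C_2$ avoiding $F$ with source-sides $R_1, R_2$ (the vertex sets reachable from $T_1$ after deletion), the edge set $\delta^+(R_1 \cup R_2)$ is again a minimum $(T_1, X)$-separator avoiding $F$. Iterating, there is a unique maximal source-side $R^*$, and I let $C^* := \delta^+(R^*)$, a min separator of size $m$ avoiding $F$. I would then show $e \notin C^*$: by hypothesis there exists a min separator $C_e$ avoiding $F \cup \{e\}$; the tail of $e$ is reachable from $T_1$ via undeletable edges (all in $F$, hence not cut by $C_e$), while the head is reachable from the tail along $e$ itself, so both endpoints of $e$ lie in the source-side of $C_e$ and therefore in $R^*$. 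Consequently $e$ does not leave $R^*$ and is not in $C^* = \delta^+(R^*)$.

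Given $C^*$, I would take an optimal linear cut $S$ avoiding $F$, let $R$ be the set of vertices reachable from $T_1$ in $(V, E \setminus S)$, and set $S' := (S \setminus \delta^+(R)) \cup C^*$. Since $\delta^+(R) \subseteq S$ (otherwise $R$ would be larger) and $\delta^+(R)$ is itself a $(T_1, X)$-separator avoiding $F$, we have $|\delta^+(R)| \geq m = |C^*|$, so $|S'| \leq |S|$. Also $S'$ avoids $F$ (both $S$ and $C^*$ do) and avoids $e$: if $e \in S$ then $\mathrm{tail}(e) \in R$ and $\mathrm{head}(e) \notin R$, so $e \in \delta^+(R)$ is dropped, and $e \notin C^*$ by the previous step. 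The main obstacle, and the technical heart of the argument, is verifying that $S'$ is actually a linear cut. For pairs involving $T_1$ the separator $C^*$ suffices. For an alleged path $P$ from $T_i$ to $T_j$ with $2 \leq i < j$ in $(V, E \setminus S')$, either $P$ enters $R^*$---in which case it is trapped there, since every edge leaving $R^*$ lies in $C^* \subseteq S'$, contradicting $T_j \notin R^*$---or $P$ stays inside $V \setminus R^* \subseteq V \setminus R$, in which case $P$ uses no edge of $\delta^+(R)$, so all of its edges lie in $E \setminus S$, contradicting the assumption that $S$ was a linear cut.
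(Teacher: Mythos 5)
Your first two steps are sound: the uncrossing/submodularity construction of the furthest minimum $(T_1,X)$-separator $C^*=\delta^+(R^*)$ avoiding $F$ is fine, and the argument that $e\notin C^*$ correctly uses the hypothesis. The proof breaks in the final verification that $S'$ is a linear cut, at the step ``$P$ stays inside $V\setminus R^*\subseteq V\setminus R$.'' That inclusion is the claim $R\subseteq R^*$, which nothing in your argument supports: $\delta^+(R)$ is a $(T_1,X)$-separator avoiding $F$, but an optimal linear cut may deliberately place a \emph{non-minimum} separator on its $T_1$-boundary because those edges also block pairs $T_i\to T_j$ with $i\ge 2$, and then the maximality of $R^*$ says nothing about $R$. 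Concretely, take $F=\emptyset$, $k=3$, $T_i=\{t_i\}$, vertices $t_1,u,z,v_2,v_3,t_2,t_3$, edges $e=(t_1,u)$, $g=(u,z)$, $(z,v_2)$, $(z,v_3)$, $p_2=(v_2,t_2)$, $p_3=(v_3,t_3)$, and three parallel edges from $t_2$ to $v_3$. The minimum $(T_1,\{t_2,t_3\})$-separators are $\{e\}$ and $\{g\}$, so the hypothesis of the theorem holds for $e$ (size $1$ with or without protecting $e$), $R^*=\{t_1,u\}$ and $C^*=\{g\}$. The set $S=\{p_2,p_3\}$ is a minimum linear cut avoiding $F$, with $R=\{t_1,u,z,v_2,v_3\}\not\subseteq R^*$ and $\delta^+(R)=S$, so your $S'=(S\setminus\delta^+(R))\cup C^*=\{g\}$, which fails to block $t_2\to v_3\to t_3$. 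So it is not just the write-up: the construction itself can output a non-cut, because deleting the boundary $\delta^+(R)$ throws away edges that were doing double duty for later pairs, and $C^*$, which sits closer to $T_1$, cannot compensate. (A smaller, fixable gap: your assertion that $e\in S$ forces $\mathrm{head}(e)\notin R$ needs an argument via minimality of $S$; it is not automatic from the definition of $R$.)

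The paper's proof avoids this trap by orienting the surgery the other way around: it adds a minimum $(T_1,X)$-separator $S^\ast$ that avoids $F\cup\{e\}$ (minimal also for $F$, by the hypothesis) and removes from the optimal linear cut $W$ only those edges lying in the region reachable from $T_1$ after deleting $S^\ast$, i.e.\ it takes $(W\cup S^\ast)$ minus the edges ``behind'' the added separator. Then every forbidden path that uses a removed edge enters that region and, since it must end in some $T_j$ with $j>1$, it must subsequently cross $S^\ast$; paths avoiding the region still meet the retained part of $W$. The size comparison is then not $\size{\delta^+(R)}\ge m$ but a Menger-type count of edge-disjoint $T_1$--$X$ paths through $S^\ast$, showing $\size{S^\ast\setminus W}\le\size{W\cap R}$. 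If you want to salvage your approach, you would have to remove from the linear cut only edges inside $R^*$ (not the boundary $\delta^+(R)$) and replace your counting step by such a disjoint-paths argument, at which point you have essentially reconstructed the paper's proof, with your $C^*$ playing the role of its separator.
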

\begin{proof}
First note that making edges undeleteable can only increase the size of the smallest cut.  Hence it suffices to show, under the hypothesis of the theorem, that the smallest linear cut with forbidden edges $F \cup \{e\}$ is no bigger than a minimal linear cut with forbidden edges $F$.

Let $S$ be a minimal $(T_1, T_2 \cup \dotsb \cup T_k)$-separator with undeletable  edges $F \cup \{e\}$.  Then $S$ is also a separator between these same sets with undeletable edges $F$, and by the assumption of the theorem $S$ is also a minimal such separator.  Let $W$ be a minimal linear cut in $G = (V,E)$ for $\T$ with undeletable edges $F$, and let $R$ denote the set of edges  that are reachable from $T_1$ in $(V, E \sm S)$.  We shall show that $W' = (W \cup S) \sm R$ is a linear cut in $G$ for $\T$ with undeletable edges $F \cup \{e\}$ which is no larger than $W$.  Since making edges undeletable can only increase the size of a smallest solution, $W'$ will indeed be minimal.

For clarity, we reformulate the problem instance without undeletable edges.  We replace each undeletable edge $(x,y) \in F \cup \{e\}$ with $p+1$ new, regular edges from $x$ to $y$, whereby transforming the graph into a multigraph without any undeletable edges.  Now any linear cut (resp.\ $(T_1, T_2 \cup \dotsb \cup T_k)$-separator) consisting of at most $p$ edges will be a solution in the transformed multigraph if and only if it is a solution in the original graph.  The reason is that there are not enough total cuts in the instance to sever connectivity between any vertices with $p+1$ multiedges.  Thus these edges are effectively undeletable, and of course cuts not involving undeletable edges or multiedges will work the same in both the original and transformed instance.

First we argue that $W'$ is not larger than $W$ by proving $\size{S \sm W} \leq \size{W \cap R}$.  Since $S$ does not contain any of the undeletable, multiedge parts of $G$, by Menger's Theorem \cite[Theorem~7.45]{KT06}, or more precisely its generalization to sets of vertices \cite[Lemma~1]{CLL09}, there are $\size{S}$ disjoint edge paths from $T_1$ to $\bigcup_{j > 1} T_j$, each containing an edge in $S$.  It follows that there are $\size{S \sm W}$ disjoint edge paths from $T_1$  to $S \sm W$.  Now suppose that $\size{W \cap R} < \size{S \sm W}$.  Then there must be a path from $T_1$ to some edge $x \in S \sm W$ which avoids $W \cap R$.  Furthermore, by minimality of $S$, there is a path from $x$ to some terminal set $T_j$ with $j > 1$.  But now there is a path from $T_1$ to some $T_j$ which avoids $W$, contradicting that $W$ is a linear cut.

It remains to show that $W'$ is in fact a linear cut  in $G$  for $\T$ with undeletable edges $F \cup \{e\}$.  Let $Q$ be a forbidden path.  If $Q$ does not intersect $R$, then it must pass through $W \sm R$ and hence through $W'$.  On the other hand, suppose that $Q$ does pass through $R$.  Since $T_1$ is the least-indexed terminal set, $Q$ must end at $T_j$ for some $j > 1$, and therefore $Q$ must pass through $S \subseteq W'$.  In either case, removing $W'$ eliminates the forbidden path $Q$.
\end{proof}

\begin{thm} \label{thm: ewok}
Algorithm~\ref{alg: lc} finds a solution in time $O[4^p p\cdot (\size{V} + \size{E}) \cdot \size{E}]$, if one exists, and outputs NO otherwise.
\end{thm}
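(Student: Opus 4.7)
The plan is to prove correctness by structural induction on the recursion tree of Algorithm~\ref{alg: lc} and then bound the runtime by combining a potential argument on the branching tree with a per-call work estimate.

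For correctness, the base cases at lines~3, 5, and 10 are immediate. Line~11 is correct because $m = 0$ means $T_1$ is already cut off from $T_2 \cup \dotsb \cup T_k$ by the current undeletable edges $F$, so one may safely peel $T_1$ off. The non-branching recursion at line~16 is precisely the content of Theorem~\ref{thm: bobo}: if the minimum separator size is preserved upon adding $e$ to $F$, so is the optimum linear cut size. The binary branch at line~17 is exhaustive, covering the two cases of whether an optimum solution contains $e$ or not: in the first case we recurse with $E \sm \{e\}$ and parameter $p-1$, and in the second case $e$ may safely be declared undeletable.

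For the runtime, I introduce the potential $\mu = 2p - m$, where $m$ is the current minimum $(T_1, T_2 \cup \dotsb \cup T_k)$-separator size avoiding $F$; since the algorithm proceeds only when $m \leq p$, we have $0 \leq \mu \leq 2p$. I claim $\mu$ strictly decreases at each binary branch and never grows at the linear recursion steps. In the deletion branch of line~17, $p$ drops by one and $m$ drops by at most one (since a single unit-capacity edge contributes at most one unit to any max-flow), so $\mu$ decreases by at least one; in the commitment branch, the very precondition for branching is that $m$ strictly grows upon adding $e$ to $F$, which again drops $\mu$ by at least one. Consequently every root-to-leaf path in the recursion tree contains at most $2p$ binary branches, so the tree has at most $4^p$ leaves.

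It remains to bound the per-leaf work. Every non-base-case call either deletes an edge from $E$, adds an edge to $F$, or shrinks $\T$ by one position; the first two events can happen at most $|E|$ times along a root-to-leaf path and the third at most $|V|$ times, and after stripping isolated vertices in linear-time preprocessing we may assume $|V| = O(|E|)$, so every path contains $O(|E|)$ recursive calls. Each call executes a single max-flow computation to find or refute a $(T_1, T_2 \cup \dotsb \cup T_k)$-separator of size at most $p$ avoiding $F$, which takes $O(p(|V|+|E|))$ time via Ford-Fulkerson once edges of $F$ are assigned capacity $p+1$. Multiplying the $4^p$ leaves, the $O(|E|)$ path length, and the $O(p(|V|+|E|))$ per-call cost yields the claimed bound $O(4^p p (|V|+|E|) |E|)$. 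The principal subtlety is the potential argument, in particular the guarantee that $\mu$ decreases in \emph{both} branches, which rests respectively on the unit-capacity max-flow fact and on the branching precondition at line~14.
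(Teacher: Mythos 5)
Your proposal is correct and follows the paper's overall plan: correctness rests on Theorem~\ref{thm: bobo} for the non-branching step plus exhaustiveness of the two-way branch, and the runtime is (number of branches) times $O(\size{E})$ non-branching calls per root-to-leaf path times an $O(p(\size{V}+\size{E}))$ separator check per call. Where you genuinely diverge is in how the $4^p$ bound on branches is derived. The paper counts splits terminal set by terminal set: if Option~2 occurs $r$ times before $T_1$ is detached, then Option~1 must occur at least $m+r$ times, so the splits witnessed are at most twice the number of edges spent from the budget, and it then asserts the same count for the successive $T_i$'s. You instead use the single global measure $\mu = 2p - m$, which drops by at least one at every binary branch (in the deletion branch $p$ drops by one while $m$ drops by at most one; in the commitment branch $m$ strictly grows) and never increases at the non-branching recursion of line~16. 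Your version is arguably cleaner and treats the passage between terminal sets uniformly, whereas the paper hand-waves that step; but to make it airtight you should say explicitly why $\mu$ does not increase at line~11, where $T_1$ is peeled off and $m$ is recomputed with respect to $\pair{T_2,\dotsc,T_k}$: this holds because the peel happens only when the old $m$ equals $0$, so $\mu$ sits at its maximum value $2p$ there and can only stay or fall. Likewise, the line~14 test can also fail because no separator avoiding $F\cup\{e\}$ exists at all; in that case the commitment-branch child returns \NO immediately at line~10, so the bound is unaffected, but your phrase ``$m$ strictly grows'' should be understood to include this case.
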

\begin{proof}
Line~13 of Algorithm~\ref{alg: lc} selects an edge $e \in E \sm F$ for consideration.  If the condition for edge~$e$ in line~14 holds, then preserving $e$ does not hurt the $(T_1, T_2 \cup \dotsb \cup T_k)$-separator, and therefore by Theorem~\ref{thm: bobo} no harm comes to the \linearcut instance by adding $e$ to the list of undeletable edges.  If this condition is not satisfied, then the algorithm exhaustively searches both for a solution containing the edge~$e$ (Option~1) and for a solution not containing~$e$ (Option~2).  In Option~1, the algorithm searches for a solution of size $p-1$ containing $e$, and in Option~2, the size of the smallest$(T_1, T_2 \cup \dotsb \cup T_k)$-separator increases by~1.  Along any branch of the algorithm, either of these two Options can occur at most $p$ times for each terminal before the algorithm returns \NO, and the latter happens only when exhaustive search fails to find a solution.  Hence the algorithm eventually terminates with the correct answer.

We can refine our analysis further to show that there are at most $4^p$ possible branches in the algorithm.  We argue that any branch of the algorithm witnesses at most $2p$ branching splits.  Suppose that the initial input parameter is $p$ and that the smallest $(T_1, T_2 \cup \dotsb \cup T_k)$-separator has size $m$.  Since each iteration of Option~1 decreases the size of the minimal $(T_1, T_2 \cup \dotsb \cup T_k)$-separator by~1, the path which always chooses Option~1 will witness exactly~$m$ branches up to the point where Line~11 of Algorithm~\ref{alg: lc} recognizes that $T_1$ has been separated and removes it from further consideration.  Each time Option~2 is chosen along the path, the size of the smallest$(T_1, T_2 \cup \dotsb \cup T_k)$-separator increases by at least~1, so if Option~2 happens $r$ times, then Option~1 must happen a total of at least $m + r$ times before $T_1$ is separated.  Thus the size of the cutset size parameter when $T_1$ becomes separated is at most $p - m - r$, the initial parameter value minus the number of times Option~1 was chosen, and the total number of splits witnessed is $(m+r) + r$, which is at most twice the number of edges added to the cutset.  The same counting argument holds for separators for successive $T_i$'s and it follows that each search path can witness at most $2p$ splits in case the algorithm succeeds.

The number of steps between each encounter with an Option is essentially the time required to check whether a separator size~$p$ exists, which is $O[p (\size{V} + \size{E})]$ by the argument in \cite[Lemma~2]{CLL09}, times the number edges. The multiplicative factor of $\size{E}$ comes from the potential recursion in line~15.  Hence the total runtime is $O[2^{2p} p \cdot (\size{V} + \size{E}) \cdot \size{E}]$.
\end{proof}

\begin{cor}
\linearcut is fixed-parameter tractable when parameterized in the size of the cutset.
\end{cor}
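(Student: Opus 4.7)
The plan is to derive the corollary directly from Theorem~\ref{thm: ewok}, which does essentially all the work. Recall the definition of fixed-parameter tractability given earlier in the paper: a problem parameterized in $p$ is FPT provided there is an algorithm that, on input $x$, produces a correct answer in time $f(p) \cdot \poly(\size{x})$ for some computable function $f$.

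I would simply invoke Algorithm~\ref{alg: lc} as the witness. By Theorem~\ref{thm: ewok}, on any input $\pair{G,\T,\emptyset,p}$ with $G=(V,E)$, Algorithm~\ref{alg: lc} either outputs a \linearcut solution of size at most $p$ or correctly returns \NO, and it does so within $O[4^p p \cdot (\size{V}+\size{E}) \cdot \size{E}]$ steps. Setting $f(p) = 4^p p$, which is clearly computable, and observing that $(\size{V}+\size{E}) \cdot \size{E}$ is polynomial in the size of the input encoding of the \linearcut instance, the runtime matches the required form $f(p) \cdot \poly(\size{x})$.

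There is no real obstacle here: the algorithmic work was carried out in Theorems~\ref{thm: bobo} and~\ref{thm: ewok}, so this corollary is essentially an unpacking of the FPT definition against the bound already established. I would therefore keep the proof to a single short paragraph, concluding that \linearcut is FPT when parameterized in the cutset size~$p$.
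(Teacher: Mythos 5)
Your proposal is correct and matches the paper's intent exactly: the corollary is an immediate consequence of Theorem~\ref{thm: ewok}, since the running time $O[4^p p\cdot(\size{V}+\size{E})\cdot\size{E}]$ is of the form $f(p)\cdot\poly(\size{x})$ with $f(p)=4^p p$ computable. Nothing further is needed.
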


\section{Hardness result}

Marx and Razgon \cite{MR11} showed that \dmulticut parameterized in the size of the cutset is $\W[1]$-hard by reducing this problem to the known $\W[1]$-hard problem $\clique$.  Therefore the following is immediate from Theorem~\ref{thm: hop}.
\begin{cor}
\latticecut is $\W[1]$-hard when parameterized in the size of the cutset.
\end{cor}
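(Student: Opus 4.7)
The statement is essentially immediate from Theorem~\ref{thm: hop} combined with Marx and Razgon's $\W[1]$-hardness result for \dmulticut. The only substantive point to check is that the reduction given in Theorem~\ref{thm: hop} is not merely an implication between $\FPT$-ness claims but an actual parameterized reduction in the sense needed to transfer $\W[1]$-hardness, namely a polynomial-time mapping from instances $\pair{G, \{(s_i, t_i)\}_{i=1}^k, p}$ of \dmulticut to instances of \latticecut with new cutset parameter $p' \leq g(p)$ for some computable function~$g$, preserving yes/no answers.

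First I would read off this property directly from the statement of Theorem~\ref{thm: hop}: the constructed \latticecut instance has a ``permitted maximum of $p$ cuts,'' so in fact $p' = p$ and we may take $g$ to be the identity. The construction adds only $O(kp)$ new vertices and edges to $G$, so it runs in time polynomial in the size of the input \dmulticut instance, and the equivalence of yes/no answers is also established in the proof of that theorem.

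I would then conclude by composing this reduction with Marx and Razgon's parameterized reduction from \clique to \dmulticut. The resulting composition is a parameterized reduction from the $\W[1]$-hard problem \clique to \latticecut, witnessing $\W[1]$-hardness of the latter under the cutset size parameter. The main obstacle anticipated is essentially non-existent, since the parameter is preserved exactly by the Theorem~\ref{thm: hop} reduction; no further blow-up analysis is required beyond confirming polynomial running time and parameter preservation, both of which are visible from the construction.
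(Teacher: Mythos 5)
Your proposal is correct and matches the paper's argument: the paper likewise obtains the corollary immediately by composing Marx and Razgon's $\W[1]$-hardness of \dmulticut (via \clique) with the parameter-preserving, polynomial-time reduction of Theorem~\ref{thm: hop}. Your additional verification that the reduction preserves the cutset parameter exactly and runs in polynomial time is just a more explicit statement of what the paper treats as immediate.
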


Whether $\dmulticut$ is fixed-parameter tractable when parameterized in both the size of the cutset and the number of terminals remains an open problem, even in the case where we fix the number of terminal pairs at $k=3$ \cite{CHM12, MR11}.  $\linearcut$ for $k=2$ is possible via the Ford-Fulkerson algorithm, however for longer chains the problem also becomes $\NP$-hard.
\begin{prop} \label{prop: lcnp}
Deciding whether a \linearcut instance has a solution of size $p$ is $\NP$-complete for $k=3$ terminals.
\end{prop}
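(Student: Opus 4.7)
The plan is to prove \NP-completeness in two steps. Membership in \NP is straightforward: given a candidate cutset $S$ with $\size{S} \leq p$, I would verify in polynomial time, via three reachability computations in $(V, E \sm S)$, that no vertex of $T_j$ is reachable from $T_i$ for any $i<j$. The substantive work is \NP-hardness, which I would establish by a polynomial-time reduction from undirected \umultiwaycut restricted to three terminals, already cited in the paper (\cite{DJPSY92}) as being \NP-hard.

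Given an instance $(G = (V,E), r_1, r_2, r_3, p)$ of undirected $3$-terminal \umultiwaycut, I would build a directed graph $G'$ by replacing each undirected edge $e = \{u,v\} \in E$ with a ``diamond'' gadget consisting of two fresh vertices $X_e, Y_e$ and the five directed edges $u \to X_e$, $v \to X_e$, $X_e \to Y_e$, $Y_e \to u$, $Y_e \to v$. The \linearcut instance uses the singleton terminal sets $T_i = \{r_i\}$ for $i \in \{1,2,3\}$ and the same cut budget $p$. The key structural property is that because the gadget is symmetric in its two endpoints, any undirected walk $w_0 - w_1 - \dotsb - w_\ell$ in $G$ lifts to a directed walk $w_0 \to X_{e_1} \to Y_{e_1} \to w_1 \to \dotsb \to w_\ell$ in $G'$ that can be traversed in either direction; conversely any directed path between two original vertices in $G'$ projects back to an undirected walk in $G$.

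For the forward direction I would take an undirected cut $S$ of size at most $p$ and delete the bottleneck edge $X_e \to Y_e$ from $G'$ for each $e \in S$, producing at most $p$ deletions. Since every directed traversal of the gadget for $e$ must cross $X_e \to Y_e$, these deletions kill every directed path through any gadget in $S$, so the surviving directed paths project onto undirected walks in $G \sm S$ and no terminal-to-terminal directed path can remain. For the backward direction I would start from a \linearcut solution $T$ of size at most $p$ and let $S$ be the set of undirected edges whose gadgets contain any edge of $T$, so that $\size{S} \leq \size{T} \leq p$. If some undirected path $r_i - v_1 - \dotsb - v_{\ell-1} - r_j$ survived in $G \sm S$, then none of its gadgets would contain any cut edge, so the corresponding directed path in $G'$ could be walked from $r_i$ to $r_j$ when $i<j$, or from $r_j$ to $r_i$ when $i>j$, producing a directed path forbidden by the \linearcut constraints and contradicting the validity of $T$.

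The main subtlety will be reconciling the inherent asymmetry of \linearcut, which only forbids paths in the ``upward'' directions $i<j$, with the symmetric nature of undirected multiway cut. The symmetric diamond gadget is exactly what is needed to bridge the two: whatever direction of a surviving undirected path I need in order to derive a contradiction, I can actually walk in that direction in $G'$ at identical cost. Once that observation is in place, the rest of the argument is routine verification.
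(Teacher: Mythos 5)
Your proposal is correct and matches the paper's proof essentially verbatim: the same diamond gadget (your $X_e, Y_e$ are the paper's $a_e, b_e$), the same choice of cutting the bottleneck edges in the forward direction, and the same projection back to undirected edges in the converse. If anything, you are slightly more explicit than the paper about why the gadget's symmetry resolves the asymmetry of \linearcut (walking the surviving path in whichever direction is forbidden), a point the paper leaves implicit.
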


\begin{proof}
\linearcut is trivially in $\NP$ as one can easily check by breadth-first search whether a given set of edges is a solution.

We reduce the undirected \umultiwaycut problem for $k=3$, which is $\NP$-hard \cite{DJPSY92}, to the \linearcut problem for $k=3$.  Let $G$ be an undirected graph with terminal nodes $s$, $t$ and $u$ be an instance of \umultiwaycut, the problem of finding a smallest edge cut which separates $s$, $t$, and $u$.  Construct a new directed graph $G'$ which has the same vertices as $G$ except for each edge $e=\{x,y\}$ in $G$ we also add two new vertices $a_e$ and $b_e$.  The edges from $G$ do not carry over to $G'$, and instead we add directed edges $(x, a_e)$, $(y, a_e)$, $(a_e, b_e)$, $(b_e, y)$, and $(b_e , x)$.  We call this collection of edges the \emph{gadget} for $e$.  Our \linearcut instance consists of the graph $G'$ together with the embedded terminals nodes  $s$, $t$, and $u$ from $G$ with the (arbitrary) tuple ordering $\pair{s,t,u}$.  Technically we treat the terminal nodes here as singleton sets when formulating this instance of \linearcut.

Assume $C = \{e_1, \dotsc, e_p\}$ is a \umultiwaycut solution for $G$.  We claim that $C' = \{(a_{e_1}, b_{e_1}), \dotsc, (a_{e_p}, b_{e_p})\}$ is then a \linearcut solution for $G'$.  Suppose there were some prohibited path in $G'$ between two terminals, say $s$ and $t$, which avoids $C'$.  This path must have the form
\[
s \to a_{(s,x_1)} \to b_{(s,x_1)} \to x_1 \to a_{(x_1,x_2)} \to b_{(x_1,x_2)} \to x_2 \to \dotsb \to t
\]
for some vertices $x_1, x_2, \dotsc$ in $G$.  Contracting all the $a_i$'s and $b_i$'s from this path yields a path from $s$ to $t$ in $G$ which avoids $C$, which is impossible.

Conversely, assume that $C' = \{d_1, \dotsc, d_p\}$ is a \linearcut solution for $G$.  For each $i \leq p$, let $e_i$ be the gadget for the edge in $G$ which $d_i$ belongs to.  Then $C = \{e_1, \dotsc, e_p\}$ is a \umultiwaycut solution for $G$ as any path $x_1 \to  \dotsb \to  x_k$ between terminals in $G$ avoiding $C$ gives rise to a path between the same terminals in $G'$ which avoids $C'$, namely
\[
x_1 \to a_{(x_1,x_2)} \to b_{(x_1,x_2)} \to x_2 \to a_{(x_2, x_3)} \to b_{(x_2,x_3)} \to x_3 \to \dotsb \to x_k,
\]
which cannot exist.  Thus \umultiwaycut is polynomial-time reducible to \linearcut.
\end{proof}

\section{Approximation}

It seems difficult to efficiently approximate \dmulticut \cite{AACM07, CK06, Gup03}, which indicates that \latticecut may not have a good approximation algorithm either.  The best known polynomial-time approximation algorithm for \dmulticut is just under $O(\sqrt n)$ \cite{AACM07}.  We wonder whether \linearcut may be easier to approximate.

Recall that $\dmultiwaycut$ is the problem of \latticecut restricted to the instances  where the underlying poset is an antichain.  

\begin{framed}
\begin{description}
\item[Name:] \dmultiwaycut

\item[Instance:] A directed graph $(V,E)$ and a tuple of \emph{terminal} sets $T_1, \dotsc, T_k$ which are subsets of $V$.

\item[Problem:] Find a smallest set of edges $S \subseteq E$ such that there is no path from $T_i$ to $T_j$ in $(V, E\sm S)$ for all $i \neq j$.
\end{description}
\end{framed}

Garg, Vazirani, and Yannakakis \cite{GVY94} gave a $2 \log n$ approximation for \dmultiwaycut, later improved to a factor of 2 by Naor and Zosin \cite{NZ97} using an LP relaxation.  The undirected \umultiwaycut problem for $k$~terminals has a simple $2 - 2/k$ approximation algorithm using isolated cuts \cite{DJPSY92} and even a $1.5 - 2/k$ approximation using LP relaxation \cite{CKR98} (see also \cite{Vaz03}).  By making two calls to Algorithm~\ref{alg: lc}, we can obtain a simple approximation to \dmultiwaycut which runs faster than Chitnis, Hajiaghayi, and Marx's $2^{2^{O(p)}}n^{O(1)}$-time exact solution \cite{CHM12} but does not beat Naor and Zosin's polynomial-time 2-approximation \cite{NZ97}.

\begin{cor} \label{cor: 2chm}
One can find a solution for \dmultiwaycut of instance size~$n$ in time $O(4^p p n^4)$ which is within a factor of two of optimal whenever a solution of size~p exists.
\end{cor}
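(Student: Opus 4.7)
The plan is to run Algorithm~\ref{alg: lc} twice, with the terminal sets listed in the two opposite linear orders, and to output the union of the two resulting linear cuts. Concretely, given a \dmultiwaycut instance on $(V,E)$ with terminal sets $T_1, \dotsc, T_k$ and a promised upper bound $p$ on the optimum, I would first invoke the algorithm on $\pair{(V,E), \pair{T_1, \dotsc, T_k}, \emptyset, p}$ to obtain an edge set $S_1$ severing every path from $T_i$ to $T_j$ with $i<j$, and then invoke it on the reversed tuple $\pair{(V,E), \pair{T_k, T_{k-1}, \dotsc, T_1}, \emptyset, p}$ to obtain $S_2$ severing every path with $i>j$. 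The approximate multiway cut I return is $S_1 \cup S_2$.

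Correctness has two parts. First, any optimal \dmultiwaycut $S^\ast$ of size $p$ separates every ordered pair of distinct terminal sets, so $S^\ast$ is a valid linear cut for each of the two orderings above. Hence by Theorem~\ref{thm: ewok} neither invocation returns \NO, and each produces an edge set of cardinality at most $p$. Second, deleting $S_1 \cup S_2$ eliminates all paths from $T_i$ to $T_j$ for every $i \neq j$, since $S_1$ handles the forward direction and $S_2$ the backward direction. Therefore $S_1 \cup S_2$ is a valid multiway cut of size at most $\size{S_1} + \size{S_2} \leq 2p$, putting it within a factor of two of the optimum.

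For the running time, each of the two calls costs $O(4^p p n^4)$ by Theorem~\ref{thm: ewok}, and forming the union is negligible, so the total remains $O(4^p p n^4)$. The only mild subtlety is ensuring that the parameter supplied to Algorithm~\ref{alg: lc} is indeed the true multiway cut optimum so that $2p$ coincides with $2\cdot\text{OPT}$; if the optimum is not known in advance one can loop $p = 1, 2, 3, \dotsc$ and stop at the first value for which both invocations succeed, which changes the running time by only a constant factor in the leading $4^p$ term. I do not anticipate any genuine obstacle here: the entire content of the argument is that the union of two linear cuts, one in each direction, automatically yields a multiway cut, and that an optimal multiway cut witnesses both directional sub-instances as being solvable within the given budget.
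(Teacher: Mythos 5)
Your proposal is correct and follows essentially the same route as the paper: run Algorithm~\ref{alg: lc} once on $\pair{T_1,\dotsc,T_k}$ and once on the reversed tuple, and output the union, noting that an optimal multiway cut certifies both directional instances so each call returns a cut of size at most the optimum. Your extra remarks on looping over $p$ and on why neither call returns \NO only make explicit what the paper leaves implicit.
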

\begin{proof}
Assume that $T_1, \dotsc, T_k$ are the terminal sets which need to be separated in the directed  graph $(V,E)$.  Using Algorithm~\ref{alg: lc}, make one \linearcut which cuts using the terminal sets $\pair{T_1,\dotsc, T_k}$ and another which uses this $k$-tuple reversed, $\pair{T_k, \dotsc, T_1}$.  The union of these two cuts is a solution to the \dmultiwaycut instance, when both exist, and neither cut is larger than the smallest possible solution.
\end{proof}

\bibliographystyle{plain}
\bibliography{lattice_cut}

\end{document}